\documentclass[aps,pra,onecolumn,
epsfig,showpacs,superscriptaddress,footnote,nofootinbib]{revtex4}

\usepackage{amsmath,amsthm,amssymb,mathrsfs,amsfonts}
\usepackage{graphicx}
\usepackage{color}
\usepackage{multirow}
\usepackage{mathtools}
\makeatletter
\def\be{\begin{equation}}
\def\ee{\end{equation}}
\def\bea{\begin{eqnarray}}
\def\eea{\end{eqnarray}}
\def\ben{\begin{equation*}}
\def\een{\end{equation*}}
\def\bean{\begin{eqnarray*}}
\def\eean{\end{eqnarray*}}
\def\bma{\begin{mathletters}}
\def\ema{\end{mathletters}}
\def\bi{\begin{itemize}}
\def\ei{\end{itemize}}
\newtheorem{thm}{Theorem}

\newtheorem{cor}[thm]{Corollary}

\newcommand{\ket}[1]{ | \, #1 \rangle}

\newcommand\Mycomb[2][^n]{\prescript{#1\mkern-0.5mu}{}C_{#2}}
\tolerance = 10000

\begin{document}

\title{Analytical construction of non local operator for n-qubit Dicke state}
\author{Sutapa Saha}
\email{sutapa.gate@gmail.com}\affiliation{Physics \& Applied Mathematics Unit, Indian Statistical Institute, 203 B.T. Road, Kolkata- 700108, India}
\author{Some Sankar Bhattacharya}
\email{somesankar@gmail.com}\affiliation{Physics \& Applied Mathematics Unit, Indian Statistical Institute, 203 B.T. Road, Kolkata- 700108, India}
\author{Arup Roy}
\email{arup145.roy@gmail.com}\affiliation{Physics \& Applied Mathematics Unit, Indian Statistical Institute, 203 B.T. Road, Kolkata- 700108, India}
\author{Amit Mukherjee}
\email{amitisiphys@gmail.com}\affiliation{Physics \& Applied Mathematics Unit, Indian Statistical Institute, 203 B.T. Road, Kolkata- 700108, India}
\author{Ramij Rahaman}
\email{ramijrahaman@gmail.com}\affiliation{Department of Mathematics, University of Allahabad, Allahabad 211002, U.P., India}

\pacs{}

\begin{abstract}
Entanglement in multipartite quantum systems is much more elusive than its bipartite counterpart. In recent past the usefulness of multipartite entangled states in several information theoretic tasks have been demonstrated. Being a resource, the detection of multipartite entanglement is an imperative necessity. Among the different classes of multipartite entangled states the Dicke state has found importance in several tasks due to its permutation symmetric nature. In this work we propose a simple and elegant way of detecting n qubit Dicke states using permutation symmetric Bell operators. We conjecture that maximal expectation value of the operator corresponds to the detection of Dicke states.
\end{abstract}

\maketitle
\section{Introduction}
It would not be an exaggeration to say that quantum entanglement has passed the test of time as a fundamental concept in both quantum foundation as well as information. Although it can be considered as `quantum' in nature, its widespread use in information theoretic and communication tasks call for a device-independent characterization. Till date a significant amount of labour has been devoted in this direction. This work humbly intends to contribute to this field, more specifically on the topic of the device independent characterization of a class of permutationally symmetric entangled states.

In the usual tomography task it is possible to estimate the unknown quantum state of system when many identical copies of the same state are available so that different measurements can be performed on each copy. Not only for single system, it is also valid for correlated systems. But in this task the experimenter has to trust the quantumness of the devices  which may not be true in reality. To avoid this difficulty one should be able to device the experiment in such a way that conclusions can be drawn even if the experimenter does not have control over the devices.

Recently preparation and characterization of multipartite states have acquired research interest keeping in view the technological implementation of information theoretic task in a distributive scenario. 14 entangled qubits were prepared via ion-trap experiments very recently. To do full tomography of this multipartite state one needs more information about this state and this procedure has been used for states of low rank, matrix product state and permutationally invariant state. Now this extra knowledge involves the trust the measurement devices. So a different method i.e Device - independent way has been approached in that scenario one does not have detailed knowledge about the experimental apparatus. In this scenario the experimental set up can be realised ad Black box scenario with some input knobs and output data counter. This device independent method is useful for randomness certification, quantum key distribution, dimension witness, certification of entangled measurements , Bit commitment and random number generation. Now this tomography via device independent initially coined as self testing where one multipartite state characterize via only statistical data. Later on large improvisation take place in self testing such as robustness of self testing(both in ideal and in presence of noise).
\par
Among the different types of multiparticle quantum states, Dicke states have attracted a lot of attention. These states were first investigated in 1954 by R. Dicke for describing light emission from a cloud of atoms \cite{dicke54}, and recently several other features have been studied: Dicke states are relatively robust to decoherence\cite{bodoky}, their permutational symmetry allows to simplify the task of state tomography\cite{gezapi,tobipi} and entanglement characterization\cite{gezajosa,gezanjp,huber,novo}. In addition, they are the symmetric states which are in some sense far away from the separable states\cite{dickedefinetti}. Finally, they are relatively easy to generate in photon experiments, and Dicke states with up to six photons have been observed experimentally\cite{hartmut,dicke4,dicke6}.
\par
They propose a method for generating all symmetric Dicke states, either in the long-lived internal levels of N massive particles or in the polarization degrees of freedom of photonic qubits, using linear optical tools only.  They also discuss how our method can also be used to prepare symmetric Dicke states in the polarization degree of freedom of photon qubits. Actually this kind of study motivates to investigate the properties of dicke states. 
\par
In \cite{Navascues13,Navascues14} authors propose the experimental realization of self testing for quantum states using swap method. As an example authors certify singlet fidelity of more than $70\%$ for bipartite case having a CHSH \cite{chsh} violation of $2.57$.In \cite{palNavascues14}, making use of the SWAP method, they extend the concept from the bipartite scenario to the multipartite for the W state \cite{dur}, the $3$ and $4$-qubit GHZ states \cite{ghz} and the 4-qubit cluster state \cite{cluster}. In case of self testing there is no need of any knowledge regarding the specific workings of the experimental devices. In \cite{palNavascues14} they left the question of self testing Dicke states open. We have tried to addressed this question by constructing a permutationally symmetric Bell operator which $n$-qubit Dicke states.
\par
The rest of the article is structured in the following way- in Sec.(\ref{sec1}) we review the concept of Parity invariant Bell inequalities to detect  multipartite entanglement in a Device independent(DI) way. Here we propose and demonstrate a Bell operator that can detect three qubit W state followed by constructing a witness of n qubit Dicke state by providing a permutation invariant Bell operator.

\section{Preliminaries}\label{sec1}
\subsection{PI Bell inequalities}\label{pi}
Bell-type inequalities are the central tool of our investigations\cite{chsh}. We shall focus on multipartite Bell polynomials which are permutationally invariant, that is, they are symmetric under any permutation of the parties. Each observer can choose between two possible measurements featuring binary outputs. We use the following simplified notation to represent such a PI Bell inequality:
\begin{align}\label{sym}
\left[ \alpha_1 \mbox{ } \alpha_2 ; \mbox{ } \alpha_{11} \mbox{ } \alpha_{12} \mbox{ } \alpha_{22} \right]  \equiv&  \alpha_1(A_1 + B_1) + \alpha_2 (A_2 + B_2)  \nonumber  \\
        &+ \alpha_{11}A_1B_1 + \alpha_{12}(A_1B_2 + A_2B_1) \nonumber\\ & +
        \alpha_{22}A_2B_2,
\end{align}
where $A_i=\pm 1$ denotes the outcome of Alice's measurement settings $i=1,2$. Likewise for Bob's settings. The extension to more parties is straightforward. For instance, for $N=3$ parties, the Mermin inequality\cite{mermin}, usually written as
\begin{equation}
M_3 = A_1B_1C_1 - A_1B_2C_2 - A_2B_1C_2 - A_2B_2C_1\leq 2
\end{equation}
now reads
\begin{equation}\label{mermin3} M_3 =  \left[ 0 \mbox{ }0  \mbox{ }; 0\mbox{ } 0\mbox{ } 0\mbox{ } ; 1\mbox{ } 0 \mbox{ }-1\mbox{ }0 \right] \leq 2.
\end{equation}
Here the maximum algebraic sum of $M_3=4$, corresponding to the set of correlations attained with a three-qubit GHZ state~\cite{ghz}:
\begin{equation}
\label{ghzstate3} GHZ_3=(|000\rangle + |111\rangle)/\sqrt 2.
\end{equation}
and Pauli $\hat X$ and $\hat Y$ measurements.

Let us turn to the case of 4 parties. The generalized Mermin-Ardehali-Belinskii-Klyshko\cite{MABK} (MABK) Bell inequality for $N=4$ is given by
\begin{equation}\label{mermin4}
M_4 =  \left[ 0 \mbox{ }0  \mbox{ } ; 0\mbox{ } 0\mbox{ } 0\mbox{
} ; 0\mbox{ } 0 \mbox{ }0\mbox{ }0 ; 1\mbox{ } 1 \mbox{ } -1\mbox{
} -1 \mbox{ } 1 \right] \leq 4.
\end{equation}
Here the quantum maximum reads $8\sqrt 2$, which can be obtained by using $\hat X$ and $\hat Y$ Pauli measurements and a four-qubit GHZ state~\cite{ghz}:
\begin{equation}
\label{ghzstate4} GHZ_4=(|0000\rangle + |1111\rangle)/\sqrt 2.
\end{equation}
\subsection{n-qubit Dicke state}
Here we introduce a class of permutationally symmetric states important from the perspective of quantum information. The $n$-qubit symmetric Dicke state of weight $m(1\leq m<n)$ is defined by,
\be\label{DS}
\ket{m,n}=\frac{1}{\sqrt{n\choose m}}\left[\sum_{j=1}^{n\choose m} \Pi_j(\ket{\overset{m}{\overbrace{111\ldots 11}}\underset{n-m}{\underbrace{000\ldots 00}}})\right],
\ee
where $\{\Pi_j(\ket{1_11_2\ldots 1_m0_{m+1}\ldots 0_n})\}$ is the set of all possible distinct permutations of $m$ 1's and $n-m$ 0's \cite{RP14}. For $m=1$, the state given in (\ref{DS}) is generally called an $n$-qubit W state.
\section{DI witness of multipartite entangled states}
\subsection{Witnessing W state}
In the case of PI Bell inequalities with two binary settings per party, there are nine independent Bell coefficients and we can write the Bell inequality in the notation of section~\ref{pi} as:
\begin{equation}
{\cal B} = \left[ b_1 \mbox{ }b_2  \mbox{ } ; b_3\mbox{ } b_4\mbox{ } b_5\mbox{ } ; b_6\mbox{ } b_7 \mbox{ }b_8\mbox{ }b_9\right] \leq L,
\label{eq:Bell_ineq}
\end{equation}
where $L$ is the local maximum.

Our aim is to construct a Bell operator which gives maximum expectation value for the 3-qubit W state \cite{dur}:
\begin{equation}
|W\rangle\equiv\frac{1}{\sqrt
3}(|001\rangle+|010\rangle+|100\rangle). \label{eq:Wstate}
\end{equation}
The operators of the measurements we have taken are the same for each party, that is $\hat A_1=\hat B_1=\hat C_1\equiv \hat M_1$ and $\hat A_2=\hat B_2=\hat C_2\equiv \hat M_2$. With this choice, observing the permutation symmetry of the W state the Bell operator may be written as
\begin{equation}\label{eq:Bell-operator}
\hat{\cal B}=\sum_{i=1}^6 \hat G_i,
\end{equation}
where
\begin{align}
\hat G_1\equiv& \hat \sigma_z    \hat \sigma_x    \hat \sigma_x\nonumber\\
\hat G_2\equiv& \hat \sigma_x    \hat \sigma_z    \hat \sigma_x\nonumber\\
\hat G_3\equiv& \hat \sigma_x    \hat \sigma_x    \hat \sigma_z\nonumber\\
\hat G_4\equiv& \hat \sigma_z    \hat \sigma_y    \hat \sigma_y\nonumber\\
\hat G_5\equiv& \hat \sigma_y    \hat \sigma_z    \hat \sigma_y\nonumber\\
\hat G_6\equiv& \hat \sigma_y    \hat \sigma_y    \hat \sigma_z \label{eq:B_operators}
\end{align}
Note above we used the shorthand $\hat \sigma_i\hat \sigma_j\hat \sigma_k$ for denoting the tensor product $\hat \sigma_i\otimes \hat \sigma_j\otimes \hat
\sigma_k$. The state giving the maximum quantum violation is the eigenstate belonging to the largest eigenvalue of the Bell-operator with the measurements chosen optimally. Therefore, we must make sure that the W state is an eigenstate of the Bell operator, that is $\langle\psi|\hat{\cal B}|W\rangle=0$ for all states $|\psi\rangle$ orthogonal to
$|W\rangle$. From $\hat \sigma_z|0\rangle=|0\rangle$, $\hat
\sigma_z|1\rangle=-|1\rangle$, $\hat \sigma_x|0\rangle=|1\rangle$, $\hat
\sigma_x|1\rangle=|0\rangle$, $\hat \sigma_y|0\rangle=i|1\rangle$ and $\hat
\sigma_y|1\rangle=-i|0\rangle$ it is not difficult to derive:
\begin{align}
\hat G_1|W\rangle&=|W\rangle-\frac{|100\rangle+|111\rangle}{\sqrt 3}\nonumber\\
\hat G_2|W\rangle&=|W\rangle-\frac{|010\rangle+|111\rangle}{\sqrt 3}\nonumber\\
\hat G_3|W\rangle&=|W\rangle-\frac{|001\rangle+|111\rangle}{\sqrt 3}\nonumber\\
\hat G_4|W\rangle&=|W\rangle-\frac{|100\rangle-|111\rangle}{\sqrt 3}\nonumber\\
\hat G_5|W\rangle&=|W\rangle-\frac{|010\rangle-|111\rangle}{\sqrt 3}\nonumber\\
\hat G_6|W\rangle&=|W\rangle-\frac{|001\rangle-|111\rangle}{\sqrt 3},
\label{eq:HW}
\end{align}
From Eqs.\ (\ref{eq:Bell-operator}) and (\ref{eq:HW}) it follows that $|W\rangle$ is an eigenstate
of $\hat{\cal B}$. The expectation value of $\hat{\cal B}$ is:
\begin{equation}
q\equiv\langle W|\hat{\cal B}|W\rangle=4.
\label{eq:expect}
\end{equation}

Let us stress that the constraints we have derived are only necessary conditions for the W state to be the one which violates the Bell inequality maximally. For the right solution the W state must be the eigenstate belonging to the maximum eigenvalue, and there must not exist another state with some different measurement operators giving the same or larger violation. This extra condition, for instance, is not guaranteed by our procedure.

At this point a pertinent question would be whether one can device a permutation invariant Bell quantity which is maximally violated by a larger class of permutationally symmetric states. We provide such a construction in the next section.

\subsection{Witness of n qubit Dicke states using a PI Bell operator}
Now let us consider a more general scenario where there are n-parties spatially separated from each other. All of them are allowed to perform spin measurements in either X-direction or Y-direction or Z-direction on their respective systems. In such a situation we propose the following construction of a generalized parity invariant Bell operator with an ubiquitous property of witnessing Dicke states:   
\begin{thm}
Let us define a Bell operator for an $n$-qubit system as
\begin{eqnarray}\label{dbell}
\mathbf{B}_n=\sum_{j=1}^{n\choose 2} \Pi_j(\underset{n}{\underbrace{ZZ\ldots ZXZ\ldots ZXZ\ldots Z}})+\nonumber\\\sum_{j=1}^{n\choose 2} \Pi_j(\underset{n}{\underbrace{ZZ\ldots ZYZ\ldots ZYZ\ldots Z}}),
\end{eqnarray}
where $\{\Pi_j(ZZ\ldots Z\underset{Y}{X}Z\ldots Z\underset{Y}{X}Z\ldots Z)\}$ is the set of all possible distinct permutations of two $\underset{Y}{X}$'s and $(n-2)$ Z's and $U\equiv \sigma_U$ for $U\in\{X,Y,Z\}$, then one must have
\begin{equation}
\langle m,n|\mathbf{B}_n|m,n\rangle=\lambda_n
\end{equation}
\end{thm}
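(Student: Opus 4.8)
The plan is to exploit the permutation symmetry of the Dicke state $\ket{m,n}$ to collapse the entire sum onto a single representative term of each type. Since $\ket{m,n}$ is invariant under any permutation of the $n$ qubits, the quantity $\bra{m,n}\Pi_j(\cdots)\ket{m,n}$ is identical for all $\binom{n}{2}$ permutations in the $X$-block, and likewise identical across the $Y$-block. Hence I would first reduce the claim to
\begin{equation}
\lambda_n=\binom{n}{2}\Big(\bra{m,n}T^X\ket{m,n}+\bra{m,n}T^Y\ket{m,n}\Big),
\end{equation}
where $T^X=\sigma_x\otimes\sigma_x\otimes\sigma_z\otimes\cdots\otimes\sigma_z$ and $T^Y=\sigma_y\otimes\sigma_y\otimes\sigma_z\otimes\cdots\otimes\sigma_z$ place the two flip operators on the first two qubits.

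Next I would compute the action of $T^X$ on an arbitrary weight-$m$ basis string $\ket{s}=\ket{s_1 s_2\cdots s_n}$. Using $\sigma_z\ket{b}=(-1)^b\ket{b}$ and $\sigma_x\ket{b}=\ket{\bar b}$ one finds $T^X\ket{s}=(-1)^{\sum_{k\ge3}s_k}\ket{\bar s_1\bar s_2 s_3\cdots s_n}$, so $\bra{s'}T^X\ket{s}$ is nonzero only when $s'=\bar s_1\bar s_2 s_3\cdots s_n$. The crucial observation is that both $s$ and $s'$ must carry Hamming weight $m$ to appear in the Dicke superposition; comparing weights forces $s_1+s_2=1$ and $\sum_{k\ge3}s_k=m-1$. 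Each contributing string then carries the common sign $(-1)^{m-1}$, and there are exactly $2\binom{n-2}{m-1}$ of them (two ways to decide which of the first two qubits is excited, times $\binom{n-2}{m-1}$ placements of the remaining excitations). Dividing by the normalization $\binom{n}{m}$ gives $\bra{m,n}T^X\ket{m,n}=2(-1)^{m-1}\binom{n-2}{m-1}/\binom{n}{m}$.

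I would then rerun the computation for $T^Y$ with $\sigma_y\ket{b}=i(-1)^b\ket{\bar b}$. The same weight-matching argument again pins down $s_1+s_2=1$, but the prefactor now reads $i^2(-1)^{s_1+s_2}(-1)^{\sum_{k\ge3}s_k}$. The place to be careful is precisely this sign bookkeeping for the $Y$-block: evaluating it on the contributing strings, the factor $i^2=-1$ cancels against $(-1)^{s_1+s_2}=-1$, so $T^Y$ reproduces exactly the same value $2(-1)^{m-1}\binom{n-2}{m-1}/\binom{n}{m}$ as $T^X$. It is tempting to expect a relative sign here, and verifying that the two factors of $i$ conspire with the parity of the flipped qubits to restore the $X$-block answer is the one genuinely delicate step.

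Finally I would assemble the pieces and simplify. Using $\binom{n}{2}=n(n-1)/2$ together with $\binom{n-2}{m-1}/\binom{n}{m}=m(n-m)/\big(n(n-1)\big)$, the product telescopes to
\begin{equation}
\lambda_n=4\binom{n}{2}\,\frac{(-1)^{m-1}\binom{n-2}{m-1}}{\binom{n}{m}}=2(-1)^{m-1}m(n-m).
\end{equation}
As consistency checks, the three-qubit W state ($m=1,n=3$) returns $\lambda_3=4$, matching the value $q=4$ found earlier for $\hat{\cal B}$, and the two-qubit case ($m=1,n=2$) returns $\lambda_2=2$, in agreement with direct evaluation on $(\ket{01}+\ket{10})/\sqrt2$.
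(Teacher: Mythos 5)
Your proof is correct for the theorem as stated, but it takes a genuinely different route from the paper's. You compute only the diagonal matrix element: permutation invariance of $\ket{m,n}$ collapses the $2\binom{n}{2}$ terms onto the two representatives $T^X$, $T^Y$, and a counting argument over computational-basis strings (weight matching forces $s_1+s_2=1$, giving $2\binom{n-2}{m-1}$ contributing strings, each with the common sign $(-1)^{m-1}$) yields $\lambda_n = 2(-1)^{m-1}m(n-m)$; your sign bookkeeping for the $Y$ block is right. The paper instead computes the full action of each pair operator $X_{ij}+Y_{ij}$ on $\ket{m,n}$, showing that the weight-changing contributions (sending $m$ to $m\pm 2$) cancel between the $X$ and $Y$ blocks, and concludes $\mathbf{B}_n\ket{m,n} = (-1)^{m-1}2m(n-m)\ket{m,n}$, i.e.\ that $\ket{m,n}$ is an exact eigenstate --- which is what the paper explicitly announces as the goal of its proof. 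Your approach is shorter and more elementary (the weight-changing terms drop out automatically because they are off-diagonal in the weight, so no cancellation needs to be tracked), but it buys strictly less: an expectation value alone does not show that $\ket{m,n}$ is an eigenvector of $\mathbf{B}_n$, and it is precisely the eigenstate property that the paper uses immediately afterwards, in the appendix table of eigenvalues and eigenstates and in the conjecture that $|\lambda_n|$ is the largest eigenvalue of $\mathbf{B}_n$ with eigenvector $\ket{m,n}$. As a side remark, your check $\lambda_3=+4$ for the W state agrees with the paper's proof and with its earlier computation $q=4$; the paper's appendix table lists the opposite sign ($-4$ for $m=1$, $n=3$), but that is an internal inconsistency of the paper, not of your argument.
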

\begin{proof}
Here we prove that $|m,n\rangle$ is an eigenstate of the operator $\mathbf{B}_n$. 
\par
Let us define n-qubit operators $$X_{ij}=ZZ\ldots Z\underset{i-th}{X}Z\ldots Z\underset{j-th}{X}Z\ldots Z,$$ $$Y_{ij}=ZZ\ldots Z\underset{i-th}{Y}Z\ldots Z\underset{j-th}{Y}Z\ldots Z,$$ for $1\leq i,j\leq n$ with $i\neq j$.	\\

\be
\begin{array}{rl}
  X_{i,j}|m,n\rangle_{0_i,0_j}&=(-1)^{m}|m+2,n\rangle_{1_i,1_j}\\Y_{i,j}|m,n\rangle_{0_i,0_j}&=(-1)^{m+1}|m+2,n\rangle_{1_i,1_j}\\
   X_{i,j}|m,n\rangle_{1_i,1_j}&=(-1)^{m-2}|m-2,n\rangle_{0_i,0_j}\\Y_{i,j}|m,n\rangle_{1_i,1_j}&=(-1)^{m-1}|m-2,n\rangle_{0_i,0_j}\\
    X_{i,j}|m,n\rangle_{0_i,1_j}&=(-1)^{m-1}|m,n\rangle_{1_i,0_j}\\Y_{i,j}|m,n\rangle_{0_i,1_j}&=(-1)^{m-1}|m,n\rangle_{1_i,0_j}\\
     X_{i,j}|m,n\rangle_{1_i,0_j}&=(-1)^{m-1}|m,n\rangle_{0_i,1_j}\\Y_{i,j}|m,n\rangle_{1_i,0_j}&=(-1)^{m-1}|m,n\rangle_{0_i,1_j}
\end{array}
\ee
Where $|m,n\rangle_{p_i,q_j}$ is any of all possible kets having p $\in \{0,1\}$'s in the i-th position and q $\in \{0,1\}$'s in the j-th position with total $m$ no. of 1's and total $n-m$ no. of 0's.
\\
\\Thus \bean
X_{i,j}|m,n\rangle&=&(-1)^{m-1}|m,n\rangle \\
&&+(-1)^{m-2}\sum\Pi_{i\neq j}\left(|m,n\rangle_{1_i,1_j}\right) \\
&&+(-1)^{m-2}\sum\Pi_{i\neq j}\left(|m,n\rangle_{0_i,0_j}\right)\\
&&+(-1)^{m-2}\sum\Pi_{i\neq j}\left(|m+2,n\rangle_{1_i,1_j}\right)\\
&&+ (-1)^{m-2}\sum\Pi_{i\neq j}\left(|m-2,n\rangle_{0_i,0_j}\right)\\
 Y_{i,j}|m,n\rangle&=&(-1)^{m-1}|m,n\rangle \\
&&+(-1)^{m-2}\sum\Pi_{i\neq j}\left(|m,n\rangle_{1_i,1_j}\right) \\
&&+(-1)^{m-2}\sum\Pi_{i\neq j}\left(|m,n\rangle_{0_i,0_j}\right)\\
&&+(-1)^{m-1}\sum\Pi_{i\neq j}\left(|m+2,n\rangle_{1_i,1_j}\right)\\
&&+ (-1)^{m-1}\sum\Pi_{i\neq j}\left(|m-2,n\rangle_{0_i,0_j}\right)
\eean
Thus \bean
(X_{i,j}+Y_{i,j})|m,n\rangle&=&(-1)^{m-1}2|m,n\rangle \\
&&+(-1)^{m-2}2\sum\Pi_{i\neq j}\left(|m,n\rangle_{1_i,1_j}\right) \\
&&+(-1)^{m-2}2\sum\Pi_{i\neq j}\left(|m,n\rangle_{0_i,0_j}\right) \eean
	Hence \bean
B_n|m,n\rangle &=&(-1)^{m-1}2\sum_{i\neq j}|m,n\rangle\\
&&+(-1)^{m-2}2\sum_{i\neq j}\sum\Pi_{i\neq j}\left(|m,n\rangle_{1_i,1_j}\right) \\
&&+(-1)^{m-2}2\sum_{i\neq j}\sum\Pi_{i\neq j}\left(|m,n\rangle_{0_i,0_j}\right)\\
&=& (-1)^{m-1}2m(n-m)|m,n\rangle
\eean
\end{proof}	
Analytically we have calculated the eigenvalues of the proposed Bell quantity $\mathbf{B}_n$ corresponding to the n-qubit Dicke states as eigenvectors up to $n=10$ (see Appendix \ref{appA}). Interestingly, the eigenvalues are extremal for the cases $m=\lceil{\frac{n}{2}}\rceil, \lfloor{\frac{n}{2}}\rfloor$. One can easily check that the eigenvalues corresponding to $m=\lceil{\frac{n}{2}}\rceil, \lfloor{\frac{n}{2}}\rfloor$ are the maximum among all of the possible eigenstates. Based on the observations presented in Appendix \ref{appA}, we propose the following conjecture- 

	For $m=\lceil{\frac{n}{2}}\rceil$ and $m=\lfloor{\frac{n}{2}}\rfloor$, $|\lambda_n|$ is the largest eigenvalue of $\mathbf{B}_n$ and given by
	\begin{equation}
	|\lambda_n|=2\sum_{j=0}^{n-2} \Mycomb[n-j]{2} (-1)^{j}
	\end{equation}
	with corresponding eigenvector $|m,n\rangle$.

\section{Discussions}
We have derived an analytical witness operator for n-qubit Dicke state in some restricted scenario. Construction of such type of non-local witness operators in general scenario can be further investigated. The work can be extended to other multipartite state. Generalization of this notion to sub-systems having higher dimension can be an interesting line of future research. Our method shall open a way to detect that multipartite dicke state in device independent manner.

\section{Acknowledgements}
 We would like to gratefully acknowledge fruitful discussions with Prof. Guruprasad Kar. We also thank Nirman Ganguly, Bihalan Bhattacharya and Tamal Guha for useful discussions.

\appendix
\section{Analytic results for the extremal eigenvalues and corresponding eigenstates of the Bell operator (Eq.\ref{dbell}) for $n=3$ to $10$}\label{appA}
\begin{widetext}
	\be
	\begin{array}{l|c|c}
		\hline n & \lambda_n & \mbox{ Eigenstate associated with $\lambda_n$ }\\\hline
		3 & -4 & \frac{1}{\sqrt{3}}\left[\ket{001}+\ket{010}+\ket{100}\right]\\
		3 & +4 & \frac{1}{\sqrt{3}}\left[\ket{011}+\ket{101}+\ket{110}\right]\\\hline
		4 & -8 & \frac{1}{\sqrt{6}}\left[\ket{0011}+\ket{0101}+\ket{0110}+\ket{1001}+\ket{1010}+\ket{1100}\right]\\\hline
		5 & -12 & \frac{1}{\sqrt{5\choose 2
			}}\left[\Pi(\ket{00011})\right]\\
			5 & +12 & \frac{1}{\sqrt{5\choose 3
				}}\left[\Pi(\ket{00111})\right]\\\hline
				6 & -18 & \frac{1}{\sqrt{6\choose 3
					}}\left[\Pi(\ket{000111})\right]\\\hline
					7 & -24 & \frac{1}{\sqrt{7\choose 3
						}}\left[\Pi(\ket{0000111})\right]\\
						7 & +24 & \frac{1}{\sqrt{7\choose 4
							}}\left[\Pi(\ket{0001111})\right]\\\hline
							8 & -32 & \frac{1}{\sqrt{8\choose 4
								}}\left[\Pi(\ket{00001111})\right]\\
								\hline
								9 & -40 & \frac{1}{\sqrt{9\choose 4
									}}\left[\Pi(\ket{000001111})\right]\\
									9 & +40 & \frac{1}{\sqrt{9\choose 5
										}}\left[\Pi(\ket{000011111})\right]\\\hline
										10 & -50 & \frac{1}{\sqrt{10\choose 5
											}}\left[\Pi(\ket{0000011111})\right]\\\hline
										\end{array}\label{tab1}
										\ee
									\end{widetext}


\begin{thebibliography}{000}
\bibitem{PVN14} Karoly F. Pal, Tamas Vertesi, and Miguel Navascues, `Device-independent tomography of multipartite quantum states', arXiv:1407.5911v2 (2014).
\bibitem{RP14} R. Rahaman and M. G. Parker, `Quantum scheme for secret sharing based on local distinguishability', Phys. Rev. A {\bf 91} 022330 (2015).
\bibitem{dicke54}
R. H. Dicke, Phys. Rev. {\bf 93}, 99 (1954).

\bibitem{dicke54}
R. H. Dicke, Phys. Rev. {\bf 93}, 99 (1954).

\bibitem{bodoky}
O. G\"uhne, F. Bodoky, and M. Blaauboer,
Phys. Rev. A {\bf 78}, 060301(R) (2008).

\bibitem{gezapi}
G. T\'oth {\it et al.},
Phys. Rev. Lett. {\bf 105}, 250403 (2010).

\bibitem{tobipi}
T. Moroder {\it et al.},
New J. Phys. {\bf 14}, 105001 (2012).

\bibitem{gezajosa}
G. T\'oth,
{J. Opt. Soc. Am. B} {\bf24}, 275 (2007).

\bibitem{gezanjp}
G. T\'oth {\it et al.},
New J. Phys. {\bf 11}, 083002 (2009).

\bibitem{huber}
M. Huber {\it et al.},
Phys. Rev. A {\bf 83}, 040301(R) (2011); Erratum:
Phys. Rev. A {\bf 84}, 039906(E) (2011).

\bibitem{novo}
L. Novo, T. Moroder, and O. G\"uhne,
Phys. Rev. A {\bf 88}, 012305 (2013).


\bibitem{dickedefinetti}
M. Christandl, R. K\"onig, G. Mitchison, and R. Renner,
Comm. Math. Phys. {\bf 273}, 473 (2007).


\bibitem{hartmut}
H. H\"affner {\it et al.},
Nature {\bf 438}, 643 (2005).

\bibitem{dicke4}
N. Kiesel {\it et al.},
Phys. Rev. Lett. {\bf 98}, 063604 (2007).

\bibitem{dicke6}
W. Wieczorek {\it et al.},
Phys. Rev. Lett. {\bf 103}, 020504 (2009).

\bibitem{Navascues13}Tzyh Haur Yang and Miguel Navascués , Phys. Rev. A 87, 050102(R)



\bibitem{Navascues14} Tzyh Haur Yang, Tamás Vértesi, Jean-Daniel Bancal, Valerio Scarani, and Miguel Navascués, Phys. Rev. Lett. 113, 040401

\bibitem{palNavascues14}  Karoly F. Pal, Tamas Vertesi, Miguel Navascues, Phys. Rev. A 90, 042340 (2014)

\bibitem{chsh} John F. Clauser, Michael A. Horne, Abner Shimony, and Richard A. Holt, Phys. Rev. Lett. 23, 880


\bibitem{dur} W. Dür, G. Vidal, and J.I. Cirac, Phys. Rev. A 62, 062314 (2000).


\bibitem{ghz}  D. M. Greenberger, M. A. Horne, A. Shimony, and
A. Zeilinger, Am. J. Phys. 58, 1131 (1990).

\bibitem{cluster}  H. J. Briegel and R. Raussendorf, "Persistent Entanglement in Arrays of Interacting Particles", Phys. Rev. Lett. 86,
910(2001).

\bibitem{mermin} N.D.Mermin, Phys. Rev. Lett. \textbf{65}, 1838-1840(1990).
\bibitem{MABK} M. Ardehali, Phys. Rev. A {\bf 46}, 5375 (1992); A.V. Belinskii and D.N. Klyshko, Phys. Usp. {\bf 36}, 653 (1993).
\end{thebibliography}
\end{document}